\providecommand{\tabularnewline}{\\}
\numberwithin{equation}{section}
\numberwithin{figure}{section}
\numberwithin{table}{section}
\theoremstyle{plain}
\newtheorem{thm}{\protect\theoremname}[section]
  \theoremstyle{definition}
  \newtheorem{defn}[thm]{\protect\definitionname}
  \theoremstyle{remark}
  \newtheorem{rem}[thm]{\protect\remarkname}
  \theoremstyle{plain}
  \newtheorem{cor}[thm]{\protect\corollaryname}
  \theoremstyle{plain}
  \newtheorem{lem}[thm]{\protect\lemmaname}
  \providecommand{\corollaryname}{Corollary}
  \providecommand{\definitionname}{Definition}
  \providecommand{\lemmaname}{Lemma}
  \providecommand{\remarkname}{Remark}
\providecommand{\theoremname}{Theorem}
\begin{document}

\title{Stable Support Recovery of Stream of Pulses with Application to Ultrasound
Imaging }

\date{August 2015, revised December 2015}

\maketitle

\author{\begin{center} Tamir Bendory, \footnotemark \and Avinoam Bar-Zion\footnotemark[\value{footnote}]\footnotetext{These two authors contributed equally to this work.}, Dan Adam, Shai Dekel and Arie Feuer\end{center}} 

\begin{abstract}
This paper considers the problem of estimating the delays of a weighted
superposition of pulses, called stream of pulses, in a noisy environment.
We show that the delays can be estimated using a tractable convex
optimization problem with a localization error proportional to the
square root of the noise level. Furthermore, all false detections
produced by the algorithm have small amplitudes. Numerical and in-vitro
ultrasound experiments corroborate the theoretical results and demonstrate
their applicability for the ultrasound imaging signal processing.
\end{abstract}

\section{Introduction }

In many engineering and scientific problems, we acquire data that
can be modeled as a weighted super-position of pulses (kernels), and
aim to decompose it into its building blocks, frequently called \emph{atoms}.
Typical examples are ultrasound imaging \cite{tur2011innovation,wagner2012compressed}
and radar \cite{Bar-IlanSub-Nyquis}, where the measurements are echoes
of the emitted pulse, that are reflected from different targets. When
attempting to decompose the stream of pulses there are two main concerns:
the robustness of the estimation and its degree of localization. While
the localization of the estimation determines how close is the estimation
to the real location of the targets, the robustness of the estimation
determines the degree to which reliable reconstruction can be performed
under different noise conditions. This paper investigates the localization
of the decomposition of stream of pulses with application to ultrasound
imaging.

Ultrasound imaging is currently one of the most widely-used medical
imaging modalities worldwide. The popularity of this modality can
be attributed to a variety of appealing characteristics, such as cost-effectiveness,
portability and being practically harmless. In the past few years
there is a shift in ultrasound systems from hardware based beam-forming
and signal processing to software based implementations. In these
new systems the RF data is accessible with greater ease, enabling
implementation of advanced signal processing algorithms. Although
ultrasound scans are sufficiently accurate for many applications,
the resolution of this modality is limited by the finite bandwidth
of ultrasound probes and the aperture of the transducer which determines
the axial and lateral resolution, respectively. The limited resolution
of ultrasound scans obscures important details and limits the clinical
useability of the modality. Another inherent feature of ultrasound
imaging, being a coherent imaging modality, is speckle noise. Biological
tissues are characterized by numerous structures much smaller than
the wavelength of the ultrasonic pulse. These structures induce local
inhomogeneity of the acoustic impedance. As a result, the ultrasonic
pulse is reflected from many independent scatterers in each resolution
cell, defined by the -6dB of the main beam in the axial, lateral and
out-of plane dimensions. The combination of these reflections creates
a complicated constructive and destructive interference pattern, named
speckle noise. Even though the small reflectors creating the speckle
noise cannot be resolved, the localization of strong reflectors is
important for many clinical application, motivating the development
of deconvolution and super-localization methods \cite{michailovich2005novel,adam2002blind,michailovich2007blind}. 

The ability of different biological structures to reflect the ultrasonic
wave is commonly represented by a 3D function, called the reflectivity
function. When assuming weak scattering and linear propagation, the
received signal (RF signal) can be modeled as a convolution of the
reflectivity function and the point spread function (PSF) of the ultrasound
scanner \cite{michailovich2003robust}. Even in cases where non-linear
propagation is dominant, the convolution model can be assumed to hold
for short signal sections. The automatic extraction of such sections
from the received RF data was discussed in several papers (e.g. \cite{michailovich2005novel,michailovich2003robust}).
Even though the ultrasonic signal results from a convolution of the
3D PSF and the 3D reflectivity function, the deconvolution problem
is usually solved by reducing the dimension of the problem and estimating
the hypothetical 1D reflectivity function from each scan line, called
a-line.

We consider a signal (scan line) which consists of a stream of pulses,
i.e. 
\begin{equation}
y\left[k\right]=\sum_{m}c_{m}g_{\sigma}\left[k-k_{m}\right]+\eta\left[k\right],\thinspace k\in\mathbb{Z},\thinspace c_{m}\in\mathbb{R},\label{eq:signal}
\end{equation}
where $g_{\sigma}\left[k\right]:=g\left(\frac{k}{\sigma N}\right)$,
$g(t)$ is an \emph{admissible} kernel, a notion which is defined
in Section \ref{sec:main_result}, $1/N$ is the uniform sampling
interval, $\sigma>0$ is a scaling parameter and $\eta$ is additive
noise with $\left\Vert \mathbf{\eta}\right\Vert _{_{1}}:=\sum_{k}\left|\eta_{k}\right|\leq\delta$
. We aim to estimate the true support $K:=\left\{ k_{m}\right\} $
and the weights $\left\{ c_{m}\right\} $ from the measurements $\left\{ y\left[k\right]\right\} $.
This paper does not deal with sampling techniques and any error due
to the sampling process (e.g. aliasing) can be treated as noise. 

The model (\ref{eq:signal}) can be represented as a convolution model
\begin{equation}
y\left[k\right]=\left(g_{\sigma}\ast x\right)\left[k\right]+\eta\left[k\right],\label{eq:alter_rep}
\end{equation}
where $'\ast'$ represents a discrete convolution, and 
\begin{equation}
x\left[k\right]=\sum_{m}c_{m}\delta\left[k-k_{m}\right],\label{eq:x}
\end{equation}
where $\delta\left[k\right]$ denotes a Kronecker delta function.
Our results can be applied directly to deconvolve richer classes of
signals, as elaborated in Section \ref{sec:main_result}.

A well-known approach to decompose the signal into its atoms is by
using parametric methods such as Prony, MUSIC, Matrix Pencil and ESPRIT
\cite{stoica2005spectral,schmidt1986multiple,Matrix_pencil,roy1989esprit}.
However, the stability of these methods is not well-understood, although
some steps towards the study of the non-asymptotic behavior have been
taken recently \cite{liao2014music,moitra2014threshold}. An alternative
way is to exploit standard compressed sensing and sparse representations
theorems, relying on the sparsity of the signal \cite{elad2010sparse,donoho2006compressed}.
These methods require dictionaries with low coherence, and hence highly
limiting the resolution of the solution.\textbf{ }

Recently, it was shown that under a kernel-dependent separation condition
on the support $K$ (see Definition \ref{def:separation}) a signal
of the form (\ref{eq:x}) can be estimated robustly by solving an
$\ell_{1}$ minimization problem \cite{bendorySOP}. This fundamental
result is presented in Theorem \ref{th:noisy}. This paper builds
on the results of \cite{bendorySOP} and follows \cite{support_detection}
to derive new results, revealing that the recovery of the $\ell_{1}$
minimization is well-localized. That is to say, the support of the
recovered signal is clustered around the support of the underlying
signal (\ref{eq:x}) as presented in Theorem \ref{th:main}. The proof
relies on the existence of a special interpolating function, constructed
as a super-position of the kernel and its derivatives. Similar techniques
were developed in the field of sparse recovery, and specifically as
the pillars of the super-resolution theory \cite{candes2013super,candes2013towards,tang2013compressed,de2012exact,bendory2013exact,bendory2013Legendre,bendorySHalgorithm,bendoryPositiveSOP,7286741}. 

As part of the increasing interest in the field, several previous
works explored the localization properties of super-resolution problems.
In \cite{de2014non,azais2014spike,support_detection}, the authors
focused their attention merely on the special cases of low-passed
convolution kernels in Fourier and algebraic polynomials spaces. In
this work we treat a much broader family of convolution kernels, named
admissible kernels (see Definition \ref{def:kernel}). The thorough
works \cite{duval2013exact,duval2015sparse} investigated the localization
behavior when the noise level drops to zero. In contrast, the analysis
in this paper, as presented in Theorem \ref{th:main}, holds for all
noise levels. Before proceeding, we would like to mention that support
recovery of sparse signals was studied in many different settings
and approaches. For instance, \cite{wainwright2009information,akccakaya2010shannon}
analyzed the support recovery of compressed sensing problems using
an information-theory point of view.

We validate our theoretical results using ultrasound imaging experiments.
These experiments include both numerical simulations and in-vitro
phantom scans. The numerical simulations enable the validation of
the main result presented in this paper using numerous reflectivity
functions with varying spacings and amplitudes. On the other hand,
the phantom scans were used to test this method under the conditions
of real ultrasound signals and a reflectivity function with known
geometry. Together these experiments corroborate the theoretical results
and show their relevance to ultrasound signal deconvolution.

The outline of this paper is as follows. In section \ref{sec:main_result},
we recall the results of \cite{bendorySOP} and present our main theoretical
result. Section \ref{sec:proof} is devoted to the proof of the localization
result. Section \ref{sec:exp} describes the setup of the experiments
we have conducted and presents the results, and Section \ref{sec:Conclusions}
concludes the work and outlines some possible future research directions.

\section{Main Result\label{sec:main_result}}

In a previous work \cite{bendorySOP}, it was shown that robust recovery
of a signal (\ref{eq:x}) from a stream of pulses (\ref{eq:alter_rep})
is possible if the support $K$ is sufficiently separated and the
convolution kernel $g(t)$ meets some mild localization properties.
To this end, we recall two interrelated fundamental definitions. A
signal that its support does not satisfy Definition \ref{def:separation}
cannot be decompose stably . For instance, a signal of the form of
$y\left[k\right]=g_{\sigma}\left[k-k_{1}\right]-g_{\sigma}\left[k-k_{2}\right]$
where $k_{1}$ and $k_{2}$ are adjacent locations on the grid will
be drown in a miniscule noise level unless the kernel $g_{\sigma}$
is very localized. Definition \ref{def:kernel} formulates the localization
properties of the kernel. Roughly speaking, it requires that the kernel
and its first three derivatives will decay sufficiently fast and that
it is concave near the origin.
\begin{defn}
\label{def:separation}A set of points $K\subset\mathbb{Z}$ is said
to satisfy the minimal separation condition for a kernel dependent
$\nu>0$ and given scaling parameter $\sigma>0$ and sampling spacing
$1/N>0$ if 
\[
\min_{k_{i},k_{j}\in K,i\neq j}\left|k_{i}-k_{j}\right|\geq N\nu\sigma.
\]

\end{defn}

\begin{defn}
\label{def:kernel}A kernel $g$ is admissible if it has the following
properties:\end{defn}
\begin{enumerate}
\item $g\in\mathcal{C}^{3}(\mathbb{R})$, is real and even.
\item \uline{Global property:} There exist constants $C_{\ell}>0,\ell=0,1,2,3$
such that $\left|g^{(\ell)}(t)\right|\leq C_{\ell}/\left(1+t^{2}\right)$,
where $g^{(\ell)}(t)$ denotes the $\ell^{th}$derivative of $g$.
\item \uline{Local property:} There exist constants $0<\varepsilon,\beta<\nu$
such that

\begin{enumerate}
\item $g(t)>0$ for all $\left|t\right|\leq\varepsilon$ and $g(t)<g(\varepsilon)$
for all $\left|t\right|>\varepsilon$. 
\item $g^{(2)}(t)<-\beta$ for all $\left|t\right|\leq\varepsilon$.
\end{enumerate}
\end{enumerate}
\begin{rem}
Two prime examples for admissible kernels are the Gaussian kernel
$g(t)=e^{-\frac{t^{2}}{2}}$ and Cauchy kernel $g(t)=\frac{1}{1+t^{2}}$.
The numerical constants associated with those kernel are presented
in Table \ref{tab1}. 
\end{rem}

\begin{rem}
The global property in Definition \ref{def:kernel} can be somewhat
weakened to $\left\vert g^{\left(\ell\right)}\left(t\right)\right\vert \leq C_{\ell}/(1+\vert t\vert^{1+s})$
for some $s>0$. In this case, the separation condition would become
dependent on $s$.

\begin{table}
\begin{centering}
\begin{tabular}{|c|c|c|}
\hline 
Kernels & Gaussian$:=e^{-\frac{t^{2}}{2}}$ & Cauchy$:=\frac{1}{1+t^{2}}$\tabularnewline
\hline 
\hline 
$C_{0}$ & 1.22 & 1\tabularnewline
\hline 
$C_{1}$ & 1.59 & 1\tabularnewline
\hline 
$C_{2}$ & 2.04 & 2\tabularnewline
\hline 
$C_{3}$ & 2.6 & 5.22\tabularnewline
\hline 
$g^{(2)}(0)$ & -1 & -2\tabularnewline
\hline 
empirical $\nu$ & 1.1 & 0.45\tabularnewline
\hline 
\end{tabular}
\par\end{centering}

\caption{\label{tab1}\emph{The table presents the numerical constants of the
global property in Definition \ref{def:kernel} for the Gaussian and
Cauchy kernels. Additionally, we evaluated by numerical experiments
the minimal empirical value of $\nu$, the separation constant of
Definition \ref{def:separation} for each kernel.}}
\end{table}

\end{rem}
The kernel-dependent separation constant $\nu$ is defined as the
minimal constant guaranteeing the existence of some interpolating
function as described in Section 3.1 of \cite{bendorySOP} (see Lemma
\ref{lemma:q}). The minimal separation constant for the Cauchy and
Gaussian kernels was evaluated numerically to be $0.45$ and $1.1$
(see Table \ref{tab1}). 

In this paper, we consider recovery by solving the following tractable
convex optimization problem:
\begin{equation}
\min_{\tilde{x}\in\ell_{1}(\mathbb{Z})}\left\Vert \tilde{x}\right\Vert _{_{1}}\quad\mbox{subject to}\quad\left\Vert y-\left(g_{\sigma}\ast\tilde{x}\right)\right\Vert _{_{1}}\leq\delta.\label{eq:opt_noisy}
\end{equation}
 Denote the solution of (\ref{eq:opt_noisy}) as 
\begin{equation}
\hat{x}[k]=\sum_{n}\hat{c}_{n}\delta\left[k-\hat{k}_{n}\right],\quad\hat{K}:=\left\{ \hat{k}_{n}\right\} .\label{eq:x_hat}
\end{equation}

We are ready now to state the main result of \cite{bendorySOP}. This
result shows that the recovery error of (\ref{eq:opt_noisy}) is proportional
to noise level. 
\begin{thm}
\label{th:noisy}Let $y$ be of the form of (\ref{eq:signal}) and
let $g$ be an admissible kernel as defined in Definition \ref{def:kernel}.
If the signal's support $K$ satisfies the separation condition of
Definition \ref{def:separation} for $\sigma,N>0$, then the solution
$\hat{x}$ of (\ref{eq:opt_noisy}) satisfies (for sufficiently large
$\nu)$ 
\[
\left\Vert \hat{x}-x\right\Vert _{_{1}}\leq\frac{16\gamma^{2}}{\beta}\delta,
\]
where 
\begin{eqnarray}
\gamma & = & \max\left\{ N\sigma,\varepsilon^{-1}\right\} .\label{eq:gamma}
\end{eqnarray}
\end{thm}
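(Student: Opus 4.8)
The plan is to follow the dual-certificate strategy underlying super-resolution theory, adapted to the discrete $\ell_1$ problem (\ref{eq:opt_noisy}). The engine is the interpolating function of Lemma \ref{lemma:q}: under the separation condition of Definition \ref{def:separation} with $\nu$ large enough, there exists
\[
q[k]=\sum_{k_j\in K}\Bigl(\alpha_j\,g_\sigma[k-k_j]+\mu_j\,g'_\sigma[k-k_j]\Bigr),
\]
with uniformly bounded coefficients, satisfying $q[k_m]=\operatorname{sign}(c_m)$ on $K$ and $|q[k]|\le 1$ everywhere. The derivative terms $g'_\sigma$ are included precisely so that $q$ has a genuine flat extremum at each $k_m$; combined with the local concavity $g^{(2)}<-\beta$ of Definition \ref{def:kernel} and the scaling $g_\sigma[k]=g(k/(\sigma N))$, this forces $q$ to bend away from $\pm1$ quadratically, $1-|q[k]|\gtrsim \beta\,(k-k_m)^2/(\sigma N)^2$, on the concave cap $|k-k_m|\le \varepsilon N\sigma$.

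First I would set $h:=\hat x-x$ and record the two elementary consequences of (\ref{eq:opt_noisy}). Feasibility of both $x$ (since $\|y-g_\sigma\ast x\|_1=\|\eta\|_1\le\delta$) and of $\hat x$ gives, by the triangle inequality, $\|g_\sigma\ast h\|_1\le 2\delta$. Optimality gives $\|\hat x\|_1\le\|x\|_1$; applying the subgradient inequality $|c_m+h[k_m]|-|c_m|\ge q[k_m]h[k_m]$ on $K$ together with $|h[k]|\ge q[k]h[k]$ off $K$, and regrouping into $\langle q,h\rangle$, yields
\[
\sum_{k\notin K}\bigl(1-|q[k]|\bigr)\,|h[k]|\ \le\ -\langle q,h\rangle\ \le\ C\,\|g_\sigma\ast h\|_1\ \le\ 2C\delta,
\]
where the bound on $-\langle q,h\rangle$ by the residual norm, with $C=O(1)$, comes from the dual representation of $q$ through the adjoint of convolution by the real even kernel $g_\sigma$ (Lemma \ref{lemma:q}).

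The decisive observation is a discreteness effect: although the weight $1-|q|$ vanishes at the support points, the grid has minimal spacing $1$, so every off-support point obeys $|k-k_m|\ge 1$. On the concave cap the quadratic estimate then gives $1-|q[k]|\gtrsim \beta/(\sigma N)^2\ge\beta/\gamma^2$, while outside all caps $q$ sits below $1$ by a fixed margin that the $\varepsilon^{-1}$ term in $\gamma=\max\{N\sigma,\varepsilon^{-1}\}$ is calibrated to dominate; hence $1-|q[k]|\ge c\,\beta/\gamma^2$ \emph{uniformly} for $k\notin K$. Dividing bounds the off-support error, $\sum_{k\notin K}|h[k]|\lesssim (\gamma^2/\beta)\,\delta$. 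To close the estimate I would recover the on-support error from $\|g_\sigma\ast h\|_1\le 2\delta$: evaluating $g_\sigma\ast h$ at each $k_m$, the diagonal term $g(0)\,h[k_m]$ dominates, since $g(0)$ is the bounded-below maximum of the concave cap and the off-diagonal contributions are controlled by the already-bounded off-support mass and the separation of $K$; summing over $m$ gives $\sum_{k\in K}|h[k]|\lesssim (\gamma^2/\beta)\,\delta$ as well. Adding the two parts and tracking the explicit constants produces the stated bound $\|\hat x-x\|_1\le \tfrac{16\gamma^2}{\beta}\delta$.

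I expect the main obstacle to be exactly the interface between the weighted inequality and the target unweighted $\ell_1$ norm: the weight $1-|q|$ degenerates quadratically at the support, so without the grid no uniform positive lower bound exists, and the entire argument hinges on converting the $O(1)$ minimal grid spacing into the floor $\beta/\gamma^2$. The subsidiary technical burden is the quantitative control of the certificate itself --- proving the coefficients are $O(1)$ and establishing the curvature estimate $1-|q|\gtrsim\beta(\cdot)^2/(\sigma N)^2$ uniformly over admissible kernels --- which is where the global decay bounds $|g^{(\ell)}(t)|\le C_\ell/(1+t^2)$ and the local conditions of Definition \ref{def:kernel} all enter, and which is the content of Lemma \ref{lemma:q}.
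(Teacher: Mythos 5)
First, a point of bookkeeping: Theorem \ref{th:noisy} is not proved in this paper at all --- it is quoted from \cite{bendorySOP} (the text explicitly calls it a simplified version of Theorem 2.12 there), so there is no in-paper proof to compare your argument against. Judged on its own terms, several pieces of your sketch are sound and standard: the feasibility bound $\left\Vert g_{\sigma}\ast h\right\Vert _{1}\leq2\delta$, the subgradient inequality yielding $\sum_{k\notin K}\left(1-\left|q[k]\right|\right)\left|h[k]\right|\leq-\langle q,h\rangle$, and the observation that the unit grid spacing converts the quadratic estimate of Lemma \ref{lemma:q} into a floor $1-\left|q[k]\right|\geq\beta/\left(4g(0)(N\sigma)^{2}\right)$ on the caps.

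However, the central step $-\langle q,h\rangle\leq C\left\Vert g_{\sigma}\ast h\right\Vert _{1}$ is not justified and, as stated, fails. The certificate of Lemma \ref{lemma:q} is a combination of $g$ \emph{and} $g^{(1)}$ centered at the support points; pairing the derivative components with $h$ produces terms of the form $\sum_{k}b_{m}\,g^{(1)}\!\left(\tfrac{k-k_{m}}{N\sigma}\right)h[k]$, i.e.\ samples of $g_{\sigma}^{(1)}\ast h$, which are not controlled by $\left\Vert g_{\sigma}\ast h\right\Vert _{1}$. Unlike the low-pass Fourier setting, where a certificate and its derivative both lie in the range of the adjoint of the measurement map, here $g_{\sigma}^{(1)}$ is not a bounded combination of integer shifts of $g_{\sigma}$, so the ``dual representation through the adjoint of convolution'' you invoke does not exist for this $q$. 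Two further claims are asserted without support from the quoted lemmas: (i) that $\left|q\right|$ stays below $1$ by a definite margin on $\mathbb{Z}_{far}$ --- Lemma \ref{lemma:q} only gives $\left\Vert q\right\Vert _{\infty}\leq1$ there, and without a strict gap the weighted inequality says nothing about $\sum_{k\in\mathbb{Z}_{far}}\left|h[k]\right|$; and (ii) that the certificate coefficients are $O(1)$, which is proved in \cite{bendorySOP} but is not part of Lemma \ref{lemma:q} as stated here. These are precisely the places where the quantitative content of the theorem lives, so the sketch does not yet constitute a proof; one would need either a certificate genuinely of the form $g_{\sigma}\ast\lambda$ with $\left\Vert \lambda\right\Vert _{\infty}$ controlled, or a different route to the on- and off-support estimates, which is what \cite{bendorySOP} actually supplies.
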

\begin{cor}
In the noiseless case $\delta=0$, (\ref{eq:opt_noisy}) results in
exact recovery.
\end{cor}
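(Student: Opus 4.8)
The plan is to obtain the corollary as an immediate specialization of Theorem~\ref{th:noisy}, since the error bound there already degenerates to equality once the noise budget vanishes. First I would observe that the hypothesis $\delta=0$ forces the noise to vanish: from $\left\Vert \eta\right\Vert_{1}\leq\delta=0$ we get $\eta\equiv 0$, so the data is exactly $y=g_{\sigma}\ast x$, i.e.\ $y$ is of the form (\ref{eq:signal}) with a clean right-hand side. In particular the true signal $x$ is feasible for (\ref{eq:opt_noisy}), since it satisfies the constraint $\left\Vert y-(g_{\sigma}\ast x)\right\Vert_{1}=0\leq 0$, so the minimization problem is well posed and its optimal value does not exceed $\left\Vert x\right\Vert_{1}$.

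Next I would simply invoke Theorem~\ref{th:noisy}. Its hypotheses are exactly those assumed here: $g$ is admissible in the sense of Definition~\ref{def:kernel} and the support $K$ satisfies the separation condition of Definition~\ref{def:separation}. The theorem then yields
\[
\left\Vert \hat{x}-x\right\Vert_{1}\leq\frac{16\gamma^{2}}{\beta}\,\delta=0,
\]
because $\beta>0$ (guaranteed by the local property in Definition~\ref{def:kernel}) and $\gamma=\max\{N\sigma,\varepsilon^{-1}\}$ is a finite positive constant, so the coefficient $16\gamma^{2}/\beta$ is finite and its product with $\delta=0$ is exactly zero. Hence $\left\Vert \hat{x}-x\right\Vert_{1}=0$, which forces $\hat{x}=x$ as elements of $\ell_{1}(\mathbb{Z})$; this is precisely exact recovery of both the support $K$ and the weights $\{c_{m}\}$.

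There is essentially no obstacle to overcome here: all of the substantive work---the construction of the interpolating dual certificate and the resulting stability estimate---is carried out in the proof of Theorem~\ref{th:noisy} in \cite{bendorySOP}, and the corollary merely reads off the $\delta\to 0$ endpoint of that Lipschitz-type bound. The only points worth checking explicitly are the two just noted: that $\delta=0$ genuinely eliminates the noise, so that the clean signal is feasible and the theorem applies verbatim, and that the constant multiplying $\delta$ is finite, so that the right-hand side collapses to zero rather than becoming indeterminate.
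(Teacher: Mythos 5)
Your proposal is correct and matches the paper's intent exactly: the corollary is stated without proof precisely because it is the immediate $\delta=0$ specialization of the bound $\left\Vert \hat{x}-x\right\Vert_{1}\leq\frac{16\gamma^{2}}{\beta}\delta$ in Theorem~\ref{th:noisy}. Your added checks (feasibility of $x$ when $\eta\equiv 0$ and finiteness of the constant $16\gamma^{2}/\beta$) are sound and harmless elaborations of the same argument.
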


\begin{rem}
A tighter bound on the recovery error is presented in Theorem 2.12
in \cite{bendorySOP}. For the sake of clarity, we presented here
a simplified version of the stability result. 
\end{rem}

\begin{rem}
\label{rem:2d}We chose to focus on the univariate results as we aim
to apply it on 1D ultrasound scans. However, we stress that an equivalent
result holds for the bivariate case. Namely, the recovery error is
proportional to the noise level $\delta$ and depends on the kernel
localization properties. 
\end{rem}
For a fix $k_{m}\in K$, let 
\begin{equation}
\mathbb{Z}_{m}:=\left\{ k\in\mathbb{Z}:\thinspace\left|k-k_{m}\right|\leq N\varepsilon\sigma\right\} .\label{eq:Zm}
\end{equation}
Accordingly, we define the partition of $\mathbb{Z}$ to disjoint
subsets $\mathbb{Z=Z}_{near}\cup\mathbb{Z}_{far}$, where 
\begin{eqnarray}
\mathbb{Z}_{near} & := & \cup_{\left\{ m:\thinspace k_{m}\in K\right\} }\mathbb{Z}_{m},\label{eq:sets}\\
\mathbb{Z}_{far} & := & \mathbb{Z}_{near}^{C}\nonumber \\
 & = & \left\{ k\in\mathbb{Z}:\thinspace\left|k-k_{m}\right|>N\varepsilon\sigma,\forall k_{m}\in K\right\} .\nonumber 
\end{eqnarray}
As the notation implies and since $\varepsilon<\nu$, the set $\mathbb{Z}_{near}$
is composed of the grid points that are close to a support location,
whereas $\mathbb{Z}_{far}$ is composed of all grid points that are
located sufficiently far from all support locations. 

This paper focuses on the localization properties of the solution
of (\ref{eq:opt_noisy}). Therefore, we present the following important
corollary which states that the superfluous spikes, far away from
the sought signal, have low amplitudes. 
\begin{cor}
\label{cor:artifacts}Let us define the set $\hat{K}_{far}:=\mathbb{Z}_{far}\cap\hat{K}$,
where $\mathbb{Z}_{far}$ and \textup{$\hat{K}$} are given in (\ref{eq:sets})
and (\ref{eq:x_hat}), respectively. Under the conditions of Theorem
\ref{th:noisy}, we have 
\[
\sum_{\left\{ n:\thinspace\hat{k}_{n}\in\hat{K}_{far}\right\} }\left|\hat{c}_{n}\right|\leq\frac{16\gamma^{2}}{\beta}\delta.
\]
\end{cor}
\begin{proof}
By definition, $x[k]=0$ for all $k\in\hat{K}_{far}$. Therefore,
using Theorem \ref{th:noisy} we get 
\begin{eqnarray*}
\sum_{\left\{ n:\thinspace\hat{k}_{n}\in\hat{K}_{far}\right\} }\left|\hat{c}_{n}\right| & = & \sum_{\left\{ n:\thinspace\hat{k}_{n}\in\hat{K}_{far}\right\} }\left|\hat{x}\left[\hat{k}_{n}\right]-x\left[\hat{k}_{n}\right]\right|\\
 & \leq & \sum_{k\in\mathbb{Z}}\left|\hat{x}\left[k\right]-x\left[k\right]\right|\\
 & \leq & \frac{16\gamma^{2}}{\beta}\delta.
\end{eqnarray*}

\end{proof}
Although the optimization problem (\ref{eq:opt_noisy}) is defined
on a discrete grid, it was proven that it converges to a solution
on the continuum (in the sense of measures) as the discretization
becomes finer \cite{just_dis}. Additionally, the behavior of the
discrete optimization problem solution, when the underlying signal
is defined on a compact domain is analyzed in detail in \cite{duval2013exact,duval2015sparse}.

We want to emphasize that our model can be extended to other types
of underlying signals, not necessarily a spike train as in (\ref{eq:x}).
For instance, suppose that the underlying signal itself is a stream
of pulses, namely a signal of the form $\grave{x}[k]=\sum c_{m}\grave{g}_{\sigma_{2}}\left[k-k_{m}\right]$.
In this case, the measurements are given as $y\left[k\right]=\left(g_{\sigma_{1}}\ast\grave{g}_{\sigma_{2}}\ast x\right)[k]$
where $x$ is a signal of the form of (\ref{eq:x}) and $g_{\sigma_{1}}$
is the convolution kernel. Therefore, our results hold immediately
if the combined convolution kernel $\tilde{g}\left[k\right]=\left(g_{\sigma_{1}}\ast\grave{g}_{\sigma_{2}}\right)\left[k\right]$
meets the definition of admissible kernel and the signal's support
$K$ satisfies the associated separation condition. For instance,
if $g$ and $\grave{g}$ are both Gaussian kernels with standard deviations
of $\sigma_{1}$ and $\sigma_{2}$, then $\tilde{g}$ is also Gaussian
with standard deviation of $\sigma=\sqrt{\sigma_{1}^{2}+\sigma_{2}^{2}}$
and thus obeys the definition of admissible kernel. 

The main theoretical contribution of this paper is the following theorem,
stating that under the separation condition, for any $k_{m}\in K$
with sufficiently large amplitude $c_{m}$, there exists a close $\hat{k}_{m}\in\hat{K}$.
Namely, the solution of (\ref{eq:opt_noisy}) locates a spike near
any spike of the underlying signal. In \cite{support_detection},
the support detection for the super-resolution problem was examined
by convex optimization tools. We follow this line of research and
present the main theorem of this paper, as follows:
\begin{thm}
\label{th:main}Let $y$ be of the form of (\ref{eq:signal}) and
let $g$ be an admissible kernel as defined in Definition \ref{def:kernel}.
If the signal support $K$ satisfies the separation condition of Definition
\ref{def:separation}, then the solution $\hat{x}$ of the convex
program (\ref{eq:opt_noisy}) has the following property: For any
$m$ such that $\left|c_{m}\right|>\frac{16\gamma^{2}}{\beta}\delta$
there exists \textup{a $\hat{k}_{m}\in\hat{K}$} such that 
\begin{eqnarray*}
\left|k_{m}-\hat{k}_{m}\right| & \leq & \frac{8\gamma^{2}}{\beta}\sqrt{\frac{g(0)\delta}{\left(\left|c_{m}\right|-\frac{16\gamma^{2}}{\beta}\delta\right)}}.
\end{eqnarray*}

\end{thm}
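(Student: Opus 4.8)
The plan is to adapt the support-detection technique of \cite{support_detection} to the present discrete, admissible-kernel setting, with the whole argument resting on the interpolating (dual) function guaranteed by Lemma \ref{lemma:q}. Fix an index $m$ with $\left|c_{m}\right|>\frac{16\gamma^{2}}{\beta}\delta$ and abbreviate $E:=\frac{16\gamma^{2}}{\beta}\delta$. Three ingredients drive the proof: a lower bound on the mass $\hat{x}$ places in the near-region $\mathbb{Z}_{m}$ of \eqref{eq:Zm}; an upper bound on a certificate-weighted version of that mass that scales linearly in $\delta$; and a quadratic lower bound for $1-\left|q[k]\right|$ near $k_{m}$ coming from the concavity hypothesis $g^{(2)}<-\beta$. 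For the first, note that since $K$ is $N\nu\sigma$-separated and $\varepsilon<\nu$, the set $\mathbb{Z}_{m}$ contains no support point other than $k_{m}$, so $x$ restricted to $\mathbb{Z}_{m}$ is the single spike $c_{m}\delta[\cdot-k_{m}]$. Theorem \ref{th:noisy} and the triangle inequality then yield $\sum_{k\in\mathbb{Z}_{m}}\left|\hat{x}[k]\right|\geq\left|c_{m}\right|-\sum_{k\in\mathbb{Z}_{m}}\left|\hat{x}[k]-x[k]\right|\geq\left|c_{m}\right|-E>0$; in particular $\hat{K}\cap\mathbb{Z}_{m}\neq\emptyset$.

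For the second ingredient, let $q$ be the interpolating function of Lemma \ref{lemma:q}, so that $q[k_{j}]=\mathrm{sign}(c_{j})$ on $K$, $\left|q[k]\right|\leq1$ for all $k$, and $q=g_{\sigma}\ast w$ for a dual vector $w$ whose $\ell_{\infty}$-norm the construction controls. Since $x$ is feasible for \eqref{eq:opt_noisy}, optimality gives $\left\Vert \hat{x}\right\Vert _{1}\leq\left\Vert x\right\Vert _{1}=\langle q,x\rangle$, while trivially $\sum_{k}\left|q[k]\right|\left|\hat{x}[k]\right|\geq\langle q,\hat{x}\rangle$. Subtracting,
\[
\sum_{k}\bigl(1-\left|q[k]\right|\bigr)\left|\hat{x}[k]\right|=\left\Vert \hat{x}\right\Vert _{1}-\sum_{k}\left|q[k]\right|\left|\hat{x}[k]\right|\leq\langle q,x-\hat{x}\rangle.
\]
Using that $g_{\sigma}$ is even, $\langle q,x-\hat{x}\rangle=\langle w,g_{\sigma}\ast(x-\hat{x})\rangle$, and since $g_{\sigma}\ast x=y-\eta$ and $g_{\sigma}\ast\hat{x}=y-\hat{\eta}$ with both $\left\Vert \eta\right\Vert _{1},\left\Vert \hat{\eta}\right\Vert _{1}\leq\delta$ (feasibility), this is at most $\left\Vert w\right\Vert _{\infty}\left\Vert \hat{\eta}-\eta\right\Vert _{1}\leq2\delta\left\Vert w\right\Vert _{\infty}$. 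As every summand is nonnegative, restricting to $\mathbb{Z}_{m}$ only decreases the left-hand side, so $\sum_{k\in\mathbb{Z}_{m}}\bigl(1-\left|q[k]\right|\bigr)\left|\hat{x}[k]\right|\leq2\delta\left\Vert w\right\Vert _{\infty}$.

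Combining the two bounds, the minimum of $1-\left|q[k]\right|$ over $\hat{K}\cap\mathbb{Z}_{m}$ cannot exceed the corresponding weighted average, so there exists $\hat{k}_{m}\in\hat{K}\cap\mathbb{Z}_{m}$ with $1-\left|q[\hat{k}_{m}]\right|\leq2\delta\left\Vert w\right\Vert _{\infty}/(\left|c_{m}\right|-E)$. Finally I would invoke the third ingredient, the quadratic estimate $1-\left|q[k]\right|\geq C(k-k_{m})^{2}/(\sigma N)^{2}$ valid on $\mathbb{Z}_{m}$, whose coefficient $C$ is proportional to $\beta$ since it descends from $q^{(2)}(k_{m})\leq-\mathrm{const}\cdot\beta$ via $g^{(2)}<-\beta$. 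This converts the certificate gap into a distance,
\[
\left|k_{m}-\hat{k}_{m}\right|\leq\sigma N\sqrt{\frac{2\left\Vert w\right\Vert _{\infty}\delta}{C\bigl(\left|c_{m}\right|-E\bigr)}},
\]
which is precisely the asserted $\sqrt{\delta/(\left|c_{m}\right|-E)}$ scaling.

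The main obstacle is the third ingredient together with the bookkeeping of constants: I must extract from the explicit construction in Section 3.1 of \cite{bendorySOP} both the bound on $\left\Vert w\right\Vert _{\infty}$ and the quadratic-decay coefficient $C$, and verify that $2(\sigma N)^{2}\left\Vert w\right\Vert _{\infty}/C$ collapses to exactly $\frac{64\gamma^{4}}{\beta^{2}}g(0)$, so that the prefactor becomes $\frac{8\gamma^{2}}{\beta}\sqrt{g(0)}$. Here the factor $g(0)$ should enter through the normalization of $q$ at its peak and the curvature of the kernel, while $\gamma^{2}/\beta$ tracks the interpolation normalization and the concavity bound. Producing these constants cleanly, rather than merely up to an unspecified factor, is the delicate part; the surrounding duality and averaging argument is soft.
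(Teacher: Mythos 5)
Your overall architecture --- a mass lower bound for $\hat{x}$ on $\mathbb{Z}_{m}$, an upper bound linear in $\delta$ for the certificate-weighted discrepancy, and the quadratic decay of $\left|q\right|$ away from $k_{m}$ --- matches the paper's, and your first ingredient is exactly the paper's opening step. The gap is in your second ingredient. You bound $\sum_{k}\left(1-\left|q[k]\right|\right)\left|\hat{x}[k]\right|\leq\langle q,x-\hat{x}\rangle\leq2\delta\left\Vert w\right\Vert _{\infty}$ by writing $q=g_{\sigma}\ast w$ and pairing $w$ against the two feasibility residuals. But Lemma \ref{lemma:q} does not put $q$ in the range of the adjoint of the measurement map: $q$ is a combination of $g\left(\frac{\cdot-t_{m}}{\sigma}\right)$ \emph{and} $g^{(1)}\left(\frac{\cdot-t_{m}}{\sigma}\right)$, and on the grid the derivative terms are not discrete convolutions of $g_{\sigma}$ against any sequence $w$ with controlled $\ell_{\infty}$ norm --- neither this paper nor the cited lemma supplies such a $w$, let alone the bound $\left\Vert w\right\Vert _{\infty}\leq\frac{8\gamma^{2}}{\beta}$ you would need to recover the stated constant. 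This is not mere bookkeeping of constants, as you suggest; it is the one step the paper deliberately routes around.

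The fix is simple and you already hold every piece of it: bound $\langle q,x-\hat{x}\rangle\leq\left\Vert q\right\Vert _{\infty}\left\Vert x-\hat{x}\right\Vert _{1}\leq\frac{16\gamma^{2}}{\beta}\delta$, using $\left\Vert q\right\Vert _{\infty}\leq1$ from Lemma \ref{lemma:q} together with the already-established $\ell_{1}$ stability of Theorem \ref{th:noisy}; this is precisely the content of Lemma \ref{lemma:2.1}. Note also that your third ingredient requires no extraction from \cite{bendorySOP}: Lemma \ref{lemma:q} explicitly states $\left|q(t)\right|\leq1-\frac{\beta\left(t-t_{m}\right)^{2}}{4g(0)\sigma^{2}}$ near $t_{m}$, so on the grid $1-\left|q[k]\right|\geq\frac{\beta\left(k-k_{m}\right)^{2}}{4g(0)\left(N\sigma\right)^{2}}$, i.e.\ your $C$ equals $\beta/(4g(0))$. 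Substituting $\frac{16\gamma^{2}}{\beta}\delta$ for your $2\delta\left\Vert w\right\Vert _{\infty}$, your averaging argument then gives
\[
\left|k_{m}-\hat{k}_{m}\right|^{2}\leq\frac{4g(0)\left(N\sigma\right)^{2}}{\beta}\cdot\frac{16\gamma^{2}\delta}{\beta\left(\left|c_{m}\right|-E\right)}\leq\frac{64g(0)\gamma^{4}}{\beta^{2}}\cdot\frac{\delta}{\left|c_{m}\right|-E},
\]
using $N\sigma\leq\gamma$, which is exactly the claimed bound with prefactor $\frac{8\gamma^{2}}{\beta}\sqrt{g(0)}$.
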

The localization result of Theorem \ref{th:main} cannot be derived
directly from the robustness result of Theorem \ref{th:noisy}. For
instance, suppose that the amplitude $c_{m}$ and the noise level
$\delta$ are of the same order. Then, Theorem \ref{th:main} guarantees
that there exists at least one location $\hat{k}_{m}\in\hat{K}$ obeying
$\left|\hat{k}_{m}-k_{m}\right|\leq C$ for some kernel-dependent
constant $C$. This localization error cannot be inferred from the
robustness result.

\section{Proof of Theorem \ref{th:main} \label{sec:proof}}

The main pillar of the proof is the existence of a special superposition
of the kernel and its first derivative which satisfies some interpolation
properties. This function is frequently called the \emph{dual certificate}.
Next, we need the following Lemma from \cite{bendorySOP}:
\begin{lem}
\label{lemma:q} Suppose that the set $T:=\left\{ t_{m}\right\} \subset\mathbb{R}$
obeys the separation condition 
\[
\min_{t_{i},t_{j}\in T,i\neq j}\left|t_{i}-t_{j}\right|\geq\nu\sigma,
\]
for some constant $\nu>0$. Let $\left\{ u_{m}\right\} \in\pm1$ be
a sign sequence and let $g(t)$ be an admissible kernel as defined
in Definition \ref{def:kernel}. Then, there exists a set of coefficients
$\left\{ a_{m}\right\} $ and $\left\{ b_{m}\right\} $ such that
the function 
\[
q(t)=\sum_{m}a_{m}g\left(\frac{t-t_{m}}{\sigma}\right)+b_{m}g^{(1)}\left(\frac{t-t_{m}}{\sigma}\right),
\]
satisfying $\left\Vert q\right\Vert _{\infty}=\max_{t\in\mathbb{R}}\left|q(t)\right|\leq1$
and interpolates the sequence $\left\{ u_{m}\right\} $ on $\left\{ t_{m}\right\} $,
namely $q\left(t_{m}\right)=u_{m}$ for all $t_{m}\in T$. Additionally,
for all $t$ satisfying $\left|t-t_{m}\right|\leq\varepsilon\sigma$
for some \textup{$t_{m}\in T$, we have $\left|q(t)\right|\leq1-\frac{\beta\left(t-t_{m}\right)^{2}}{4g(0)\sigma^{2}}$.} 
\end{lem}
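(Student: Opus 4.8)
The plan is to follow the now-standard dual-certificate construction from super-resolution theory, adapted to a general admissible kernel. After rescaling so that $\sigma=1$ (replacing $t$ by $t/\sigma$ throughout, which leaves the separation hypothesis in the form $\min_{i\neq j}|t_i-t_j|\geq\nu$), I would seek $q$ by imposing two families of linear constraints: the interpolation conditions $q(t_j)=u_j$ and the stationarity conditions $q'(t_j)=0$ for every $t_j\in T$. The stationarity conditions force each $t_j$ to be a critical point of $q$, which is what will ultimately let the sign pattern $u_j$ be attained at the boundary value $\pm1$ without $|q|$ exceeding $1$. Writing $q(t)=\sum_m a_m g(t-t_m)+b_m g^{(1)}(t-t_m)$, these $2|T|$ constraints become a square linear system in the unknowns $\{a_m\},\{b_m\}$, whose coefficient matrix is built from the values $g^{(\ell)}(t_j-t_m)$ for $\ell=0,1,2$.

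The first step is to show this system is invertible with a well-controlled solution. Because $g$ is even, $g^{(1)}$ is odd, so the diagonal $2\times2$ blocks are $\mathrm{diag}(g(0),g^{(2)}(0))$, which are nonsingular and decoupled. The off-diagonal blocks involve $g^{(\ell)}(t_j-t_m)$ with $|t_j-t_m|\geq\nu$, so by the global decay property $|g^{(\ell)}(t)|\leq C_\ell/(1+t^2)$ together with the separation condition, each row's off-diagonal mass is dominated by a convergent tail sum $\sum_{k\geq1}C_\ell/(1+(k\nu)^2)$. Choosing $\nu$ large makes this tail arbitrarily small, so a Neumann-series (equivalently, block diagonal-dominance / Gershgorin) argument gives invertibility and yields $a_m=u_m/g(0)+O(\nu^{-1})$ and $b_m=O(\nu^{-1})$, with the error terms controlled explicitly by the same tail sums.

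With the coefficients in hand, the local quadratic estimate follows from a second-order Taylor expansion around each $t_m$. Since $q(t_m)=u_m$ and $q'(t_m)=0$, one has $u_m q(t)=1+\tfrac12 u_m q''(\xi)(t-t_m)^2$ for some $\xi$ between $t_m$ and $t$. The dominant contribution to $q''$ near $t_m$ is $a_m g^{(2)}(t-t_m)\approx g^{(2)}(t-t_m)/g(0)$, and the local property $g^{(2)}(s)<-\beta$ on $|s|\leq\varepsilon$ makes this term at most $-\beta/g(0)$. The remaining terms---from the other centers $t_{m'}$ and from the small $b$-coefficients---are bounded by tail sums of $g^{(2)},g^{(3)}$, which for large $\nu$ absorb at most half of this budget; this leaves $u_m q''(\xi)\leq-\beta/(2g(0))$ on $|t-t_m|\leq\varepsilon$, and since $u_m q$ stays positive there, we obtain $|q(t)|\leq1-\beta(t-t_m)^2/(4g(0))$, which becomes the stated bound after undoing the rescaling.

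The main obstacle is the remaining global bound $\|q\|_\infty\leq1$ away from the support. I would partition $\mathbb{R}$ into the near region $\bigcup_m\{|t-t_m|\leq\varepsilon\}$, where the quadratic estimate just proved already gives $|q|\leq1$, and the complementary far region. On the far region no single Taylor expansion applies, so I would instead bound $|q(t)|$ directly by summing $|a_m|\,|g(t-t_m)|+|b_m|\,|g^{(1)}(t-t_m)|$ against the decay $C_\ell/(1+(t-t_m)^2)$; the point $t$ lies at distance at least $\varepsilon$ from every center and the centers are $\nu$-separated, so this is again a controllable tail sum plus the contribution of the nearest center, where one invokes the local property $g(t)<g(\varepsilon)<g(0)$ for $|t|>\varepsilon$. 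Coordinating all the constants---making each tail simultaneously small enough for invertibility, for the local budget, and for the far-field estimate---is what pins down how large $\nu$ must be, and is the delicate part of the argument.
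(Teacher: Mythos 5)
This paper does not prove Lemma \ref{lemma:q} at all --- it is imported verbatim from \cite{bendorySOP}, and your proposal reconstructs essentially the same dual-certificate construction used there: impose $q(t_j)=u_j$ and $q^{(1)}(t_j)=0$, invert the resulting $2\times2$-block system by diagonal dominance (using $g^{(1)}(0)=0$, $g(0)>0$, $g^{(2)}(0)<0$ and the tail sums $\sum_{k\geq1}C_\ell/(1+(k\nu)^2)$), then obtain the local quadratic bound by Taylor expansion around each $t_m$ and the far-field bound $|q|<1$ by direct summation against the decay, with the nearest-center margin $g(\varepsilon)<g(0)$. Your sketch is correct and matches that reference's approach; the only step stated rather than argued is the positivity of $u_m q$ on $|t-t_m|\leq\varepsilon\sigma$, which requires the uniform bound on $|q^{(2)}|$ (so that $1-\tfrac12\sup|q^{(2)}|\,\varepsilon^2>0$), exactly as handled in \cite{bendorySOP}.
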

We will also need the following result:
\begin{lem}
\label{lemma:2.1} Let the signal support K satisfy the separation
condition of Definition \ref{def:separation} and let $\hat{K}_{near}:=\mathbb{Z}_{near}\cap\hat{K}$,
where $\mathbb{Z}_{near}$ and $\hat{K}$ are defined in (\ref{eq:sets})
and (\ref{eq:x_hat}), respectively. Then, 
\[
\sum_{\left\{ n:\thinspace\hat{k}_{n}\in\hat{K}_{near}\right\} }\left|\hat{c}_{n}\right|d^{2}\left(\hat{k}_{n},K\right)\leq\frac{64g(0)\gamma^{4}}{\beta^{2}}\delta,
\]
where $d\left(k,K\right):=\min_{k_{m}\in K}\left|k_{m}-k\right|$
and $\gamma$ and $\delta$ are defined in (\ref{eq:gamma}) and (\ref{eq:signal}),
respectively. \end{lem}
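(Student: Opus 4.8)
The plan is to pair the dual certificate of Lemma \ref{lemma:q} against the recovery error $h:=\hat{x}-x$ and then convert its quadratic behaviour near the support into the claimed weighted bound. First I would invoke Lemma \ref{lemma:q} with the rescaled support $T=\{k_{m}/N\}$ — which satisfies the continuous separation $\min_{i\neq j}|t_{i}-t_{j}|\geq\nu\sigma$ exactly because $K$ obeys Definition \ref{def:separation} — and with the sign pattern $u_{m}=\mathrm{sign}(c_{m})$. Sampling the resulting interpolant on the grid, $Q[k]:=q(k/N)$, the scaling identity $g\!\left((k/N-k_{m}/N)/\sigma\right)=g_{\sigma}[k-k_{m}]$ shows that $Q$ is a bona fide discrete function with $\|Q\|_{\infty}\leq1$, $Q[k_{m}]=\mathrm{sign}(c_{m})$, and, for every grid point with $|k-k_{m}|\leq N\varepsilon\sigma$, the quadratic estimate $|Q[k]|\leq1-\beta(k-k_{m})^{2}/(4g(0)\sigma^{2}N^{2})$.

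The crux is a short energy inequality. Since $x$ is feasible for (\ref{eq:opt_noisy}) (indeed $\|y-g_{\sigma}\ast x\|_{1}=\|\eta\|_{1}\leq\delta$), optimality gives $\|\hat{x}\|_{1}\leq\|x\|_{1}$, while the interpolation property yields $\langle Q,x\rangle=\sum_{m}c_{m}\,\mathrm{sign}(c_{m})=\|x\|_{1}$. Writing $S:=\sum_{\hat{k}_{n}\in\hat{K}_{near}}|\hat{c}_{n}|\bigl(1-|Q[\hat{k}_{n}]|\bigr)$ and using the termwise bound $\mathrm{sign}(\hat{c}_{n})\,Q[\hat{k}_{n}]\leq|Q[\hat{k}_{n}]|\leq1$, I would estimate
\[
S\;\leq\;\sum_{n}|\hat{c}_{n}|\bigl(1-\mathrm{sign}(\hat{c}_{n})Q[\hat{k}_{n}]\bigr)\;=\;\|\hat{x}\|_{1}-\langle Q,\hat{x}\rangle\;\leq\;\|x\|_{1}-\langle Q,\hat{x}\rangle\;=\;-\langle Q,h\rangle.
\]
The key simplification is that $-\langle Q,h\rangle\leq|\langle Q,h\rangle|\leq\|Q\|_{\infty}\|h\|_{1}\leq\|h\|_{1}$, so I never need control on the derivative coefficients $\{b_{m}\}$ of $q$; Theorem \ref{th:noisy} then supplies $\|h\|_{1}\leq\frac{16\gamma^{2}}{\beta}\delta$, hence $S\leq\frac{16\gamma^{2}}{\beta}\delta$.

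Finally I would trade $S$ for the distance-weighted sum. For $\hat{k}_{n}\in\hat{K}_{near}$ there is a support point $k_{m}$ with $|\hat{k}_{n}-k_{m}|\leq N\varepsilon\sigma$, so the quadratic estimate applies, and since $d^{2}(\hat{k}_{n},K)\leq(\hat{k}_{n}-k_{m})^{2}$,
\[
|\hat{c}_{n}|\,d^{2}(\hat{k}_{n},K)\;\leq\;\frac{4g(0)\sigma^{2}N^{2}}{\beta}\,|\hat{c}_{n}|\bigl(1-|Q[\hat{k}_{n}]|\bigr).
\]
Summing over $\hat{K}_{near}$ and using $(N\sigma)^{2}\leq\gamma^{2}$ from (\ref{eq:gamma}) gives $\sum|\hat{c}_{n}|\,d^{2}\leq\frac{4g(0)\gamma^{2}}{\beta}S\leq\frac{64g(0)\gamma^{4}}{\beta^{2}}\delta$, which is the claim with the stated constant. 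I expect the only delicate points to be bookkeeping rather than anything deep: the $k/N$ rescaling that links the discrete program to the continuous certificate, the termwise sign inequality that lets the sum over $\hat{K}_{near}$ be dominated by the full signed sum $\|\hat{x}\|_{1}-\langle Q,\hat{x}\rangle$, and checking that every near spike truly lies within the radius $\varepsilon\sigma$ where Lemma \ref{lemma:q} guarantees the quadratic decay.
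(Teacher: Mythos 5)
Your proof is correct and follows essentially the same route as the paper's Appendix A: sample the dual certificate of Lemma \ref{lemma:q} with signs $u_{m}=\mathrm{sign}(c_{m})$, combine the quadratic decay near the support with the optimality inequality $\|\hat{x}\|_{1}\leq\|x\|_{1}$ and the $\ell_{1}$ error bound of Theorem \ref{th:noisy}, and conclude via $(N\sigma)^{2}\leq\gamma^{2}$. Your bookkeeping through the quantity $S$ is a slightly cleaner packaging of the paper's chain of inequalities, but the ingredients and constants are identical.
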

\begin{proof}
See Appendix \ref{sec:appA}.
\end{proof}
We are now ready to prove Theorem \ref{th:main}. Fix $k_{m}\in K$
, let $\mathbb{Z}_{m}$ be as in (\ref{eq:Zm}) and let $\hat{K}_{m}:=\mathbb{Z}_{m}\cap\hat{K}.$
Since $\varepsilon<\nu$ (the separation constant), $\mathbb{Z}_{m}\cap K=\left\{ k_{m}\right\} $.
Hence by Theorem \ref{th:noisy} we get 
\begin{eqnarray*}
\left|c_{m}-\sum_{\left\{ n:\thinspace\hat{k}_{n}\in\hat{K}_{m}\right\} }\hat{c}_{n}\right| & = & \left|\sum_{k\in\mathbb{Z}_{m}}x[k]-\hat{x}[k]\right|\\
 & \leq & \sum_{k\in\mathbb{Z}_{m}}\left|x[k]-\hat{x}[k]\right|\\
 & \leq & \sum_{k\in\mathbb{Z}}\left|x[k]-\hat{x}[k]\right|\\
 & \leq & \frac{16\gamma^{2}}{\beta}\delta.
\end{eqnarray*}
By the triangle inequality we then get $\left|c_{m}\right|-\frac{16\gamma^{2}}{\beta}\delta\leq\sum_{\left\{ n:\thinspace\hat{k}_{n}\in\hat{K}_{m}\right\} }\left|\hat{c}_{n}\right|$.
If $\left|c_{m}\right|>\frac{16\gamma^{2}}{\beta}\delta$, then $\hat{K}_{m}$
is not an empty set, namely, there exists at least one $\hat{k}_{n}\in\hat{K}$
so that $\left|\hat{k}_{n}-k_{m}\right|\leq N\varepsilon\sigma$.
Let $\bar{d}=\min_{\hat{k}_{n}\in\hat{K}_{m}}\left|\hat{k}_{n}-k_{m}\right|$.
Then, using Lemma \ref{lemma:2.1} we directly get 
\begin{eqnarray*}
\bar{d}^{2} & \leq & \frac{64g(0)\gamma^{4}}{\beta^{2}\sum_{\left\{ n:\thinspace\hat{k}_{n}\in\hat{K}_{near}\right\} }\left|\hat{c}_{n}\right|}\delta\\
 & \leq & \frac{64g(0)\gamma^{4}}{\beta^{2}\left(\left|c_{m}\right|-\frac{16\gamma^{2}}{\beta}\delta\right)}\delta.
\end{eqnarray*}
This concludes the proof.

\section{Ultrasound Experiments \label{sec:exp}}

The theoretical results and aforementioned algorithms were validated
in a series of numerical and \emph{in-vitro} experiments. In the \emph{in-vitro}
experiments, a mechanical phantom with a known geometry was scanned
using a single-element ultrasound transducer. The same setup was used
in order to estimate the PSF for the numerical experiments. An unfocused,
single-element Panametrix transducer with 0.375'' diameter was used
in all the scans (A3265-SU, Whaltham, MA). This transducer was operated
in a transmitter-receiver pulse-echo configuration. A single channel
arbitrary function generator (8024, Tabor Electronics, Tel-Hanan,
Israel) was used to drive the transducer with a pulse containing adjustable
number of sinusoidal cycles. The function generator was operated at
4MHz during all the experiments described below. The received RF-lines
were sampled at a rate of 100MHz using a 14bit digitizer (CS8427 ,Gage,
Lockport, IL). In order to protect the digitizer from over-voltage
the received RF data was passed through an external diplexer (RDX-6,
RITEC, Warwick, RI). A diagram of the experimental setup can be found
in Figure~\ref{fig:Exprimental-set-up:}a .The recorded RF signals
were loaded into a personal computer and analyzed using dedicated
MATLAB\textregistered{} code (The Mathworks, Natic, MA).

\subsection{Numerical Simulation\label{sub:Numerical-Simulation}}

In order to emulate a realistic RF line, the ultrasound pulse used
in the numerical simulations was the pulse-echo signal reflected from
a thick perspex sheet immersed in a water tank. Pulses with two and
three cycles were transmitted and the reflected echos were received
and recorded. It is assumed that the received pulse measured this
way is a reliable estimation for the convolution kernel in equation
(\ref{eq:alter_rep}). The simulated ultrasound signal was produced
according to the following steps. First, the simulated reflectivity
signal was produced. The reflectivity signals were comprised of a
series of evenly-spaced reflectors with random amplitudes uniformly
distributed between $\pm[5,10]$. Different separations enabled the
comparison between the localization results achieved for different
reflectors density. Second, the simulated reflectivity function was
convolved with the estimated kernel. Different levels of SNR were
simulated by adding a Gaussian white noise to the resulting signal.
100 simulated signals were produced for each combination of separation
and SNR conditions. 

Although the RF pulses as modulated signals do not decay monotonously,
the demodulated signals and their envelops are compatible with the
model described above (\ref{eq:signal}). Therefore, the simulated
signals were demodulated by multiplication with the carrier frequency
and application of a low-pass FIR filter. Following the demodulation,
the optimization problem (\ref{eq:opt_noisy}) was solved using CVX,
a toolbox for specifying and solving convex programs \cite{cvx}.
The value of $\delta$ was selected so that $\left\Vert \eta\ast g\right\Vert _{\ell_{1}}\leq\delta$
in accordance with equation (\ref{eq:opt_noisy}) and adjusted according
to the SNR level. Examples of the original and reconstructed reflectivity
functions produced using a two-cycle pulse and a three-cycle pulse
with SNR of 23 dB are presented in Figures \ref{fig:Simulated-signal}(a)
and \ref{fig:Simulated-signal}(b), respectively. This example demonstrates
how the localization of the estimation decreases and the locations
of the recovered reflectors get further from their original locations
as the width of the convolution kernel increases.

\begin{figure*}
\includegraphics[scale=0.7]{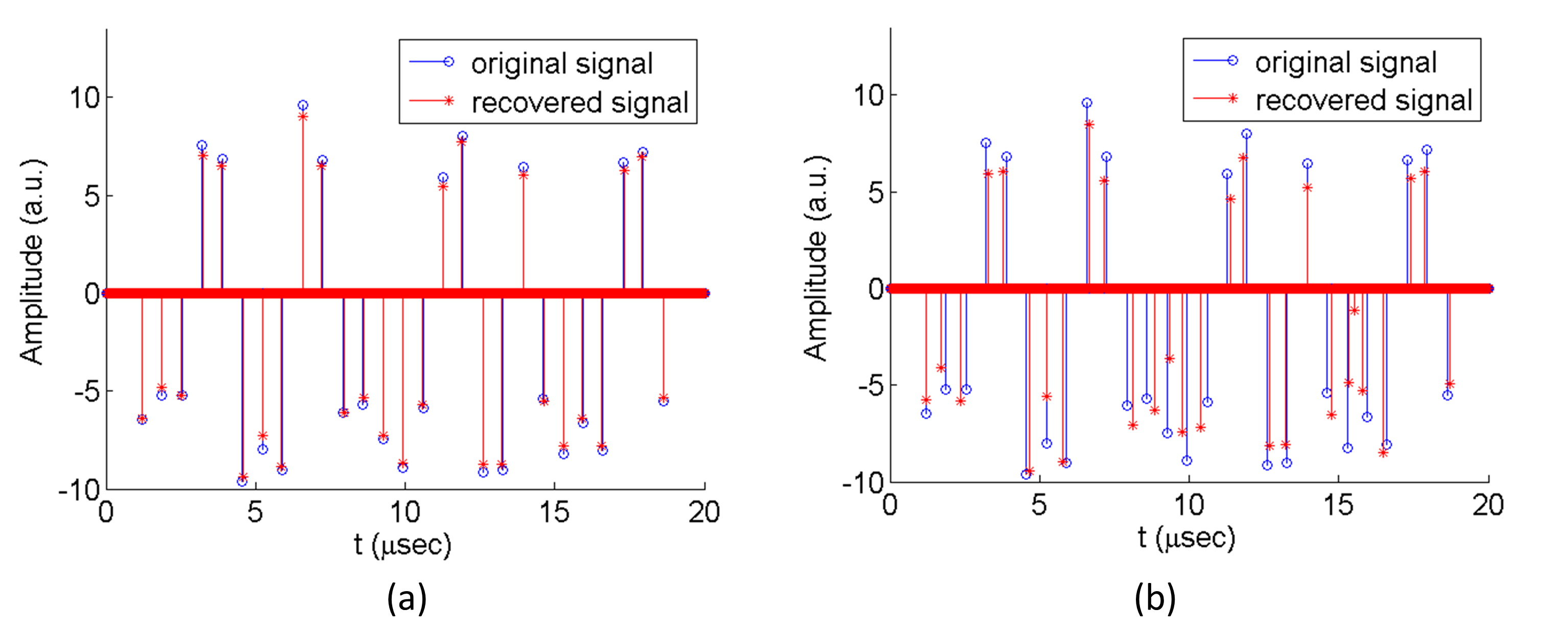}

\caption{\label{fig:Simulated-signal}Representative results of numerical simulation
experiments: reflectivity estimates are presented against true reflectors
locations, for two different pulse lengths at SNR = 23dB. (a) Results
from a two-cycle pulse, (b) results from a three-cycle pulse. }
\end{figure*}

In order to quantify the performance of the localization algorithm,
the localization error was measured for different levels of SNR using
two-cycle three-cycle pulses. The localization error was defined as
the distance between the real location of each reflector and the location
of the closest reflector in the solution. The results of this analysis
can be seen in Figure \ref{fig:Quantification-of-numerical}. For
each separation and SNR level, subplots (a) and (b) present the mean
localization error of the simulations using two-cycle and three-cycle
pulses, respectively. The mean localization error of the simulations
produced using the two-cycle pulse is relatively small for SNR levels
greater than 15 dB and increases for small separation intervals. In
contrast, the mean localization error of the simulations produced
using the three-cycle pulse is markedly higher as the reflectors are
not sufficiently separated. Subplots (c) and (d) present the standard
deviation of the localization error for each separation and SNR level.\textbf{
}When measuring the mean localization error most of the localization
error is averaged out, indicating that the solutions are clustered
around the real locations. The measured standard deviation values
are almost one order of magnitude higher than these of the mean localization
error. Hence, the standard deviation of the localization error is
a much more suitable criterion for localization loss. 

\begin{figure*}
\begin{centering}
\includegraphics[scale=0.7]{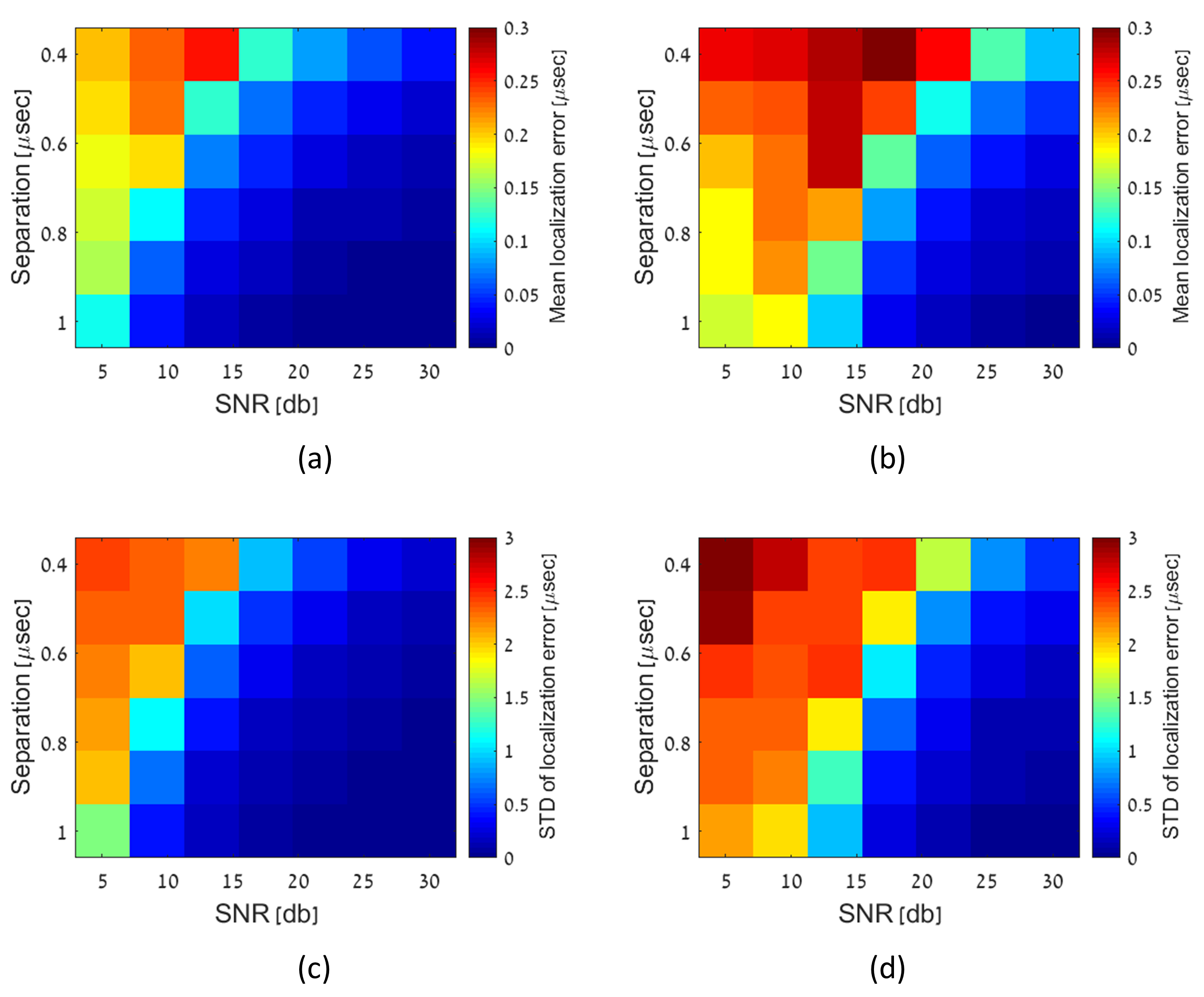}
\par\end{centering}

\caption{\label{fig:Quantification-of-numerical}Quantification of the numerical
simulations: In subplots (a) and (b) the mean localization error is
presented for each separation and SNR level for simulations produced
using a two-cycle pulse and a three-cycle pulse, respectively. In
subplots (c) and (d) the standard deviation of the localization error
is presented for simulations produced using a two-cycle pulse and
a three-cycle pulse, respectively. }
\end{figure*}

\subsection{In-vitro experiment}

In order to validate the algorithm's ability to reliably detect the
location of reflectors in real ultrasound RF scans, a phantom with
a known geometry was constructed. Similar to the phantom used in \cite{adam2002blind},
the phantom was constructed from parallel layers of perspex and polycarbonate
held together by a rigid frame. The dimensions of the phantom in the
direction orthogonal to the transmitted beam are much larger than
each layer's thickness making it a 1D phantom. The thickness of the
layers and the distance between them were measured in advance and
are detailed in Table \ref{tab2}. The large differences between the
impedance of the solid perspex and polycarbonate sheets and the water
in which they are submerged produce strong acoustic reflectors in
each interface. Therefore, only a fraction of the acoustic energy
penetrates beyond the first few layers. Due to the fast decay of the
signal only the first five layers (six reflectors) were relevant to
our analysis. 

\begin{table}
\begin{centering}
\begin{tabular}{|c|c|c|}
\hline 
Number of layer & Layer Thickness (mm) & Layer Spacing (mm) \tabularnewline
\hline 
\hline 
1 & 3 (Perspex) & \tabularnewline
\hline 
 &  & 1.3\tabularnewline
\hline 
2 & 1 (polycarbonate) & \tabularnewline
\hline 
 &  & 0.9\tabularnewline
\hline 
3 & 1.8 (Perspex) & \tabularnewline
\hline 
 &  & 1.3\tabularnewline
\hline 
4 & 1 (polycarbonate) & \tabularnewline
\hline 
 &  & 1.5\tabularnewline
\hline 
5 & 1.8 (Perspex) & \tabularnewline
\hline 
 &  & 1.3\tabularnewline
\hline 
6 & 3 (Perspex) & \tabularnewline
\hline 
\end{tabular}
\par\end{centering}

\caption{\label{tab2}Dimensions of the Perspex-polycarbonate phantom. The
gap between each Perspex/ polycarbonate layer was filled with water
as the phantom was submerged in a water tank.}
\end{table}

The phantom was design to enable the validation of the localization
algorithm under a various of relevant conditions. The thickness of
the Perspex and polycarbonate layers and the distance between them
are of the same order as the wavelength of the transmitted pulse.
Therefore, the RF-lines can have variable levels of echo interference
as function of the number of cycles transmitted in each pulse. Additionally,
although the alternating high reflectivity values do not mimic a specific
tissue of organ, they enable the validation of the localization algorithm
under various SNR values. 

The transducer was inserted through one of the walls of the a dedicated
water tank filled with deionized water at room temperature (see Figure
\ref{fig:Exprimental-set-up:}(b)). The phantom was placed in the
center of the tank and was aligned to be perpendicular to the beam
by changing its orientation until a maximal echo amplitude was achieved.
Pulses with one to four cycles were transmitted and the reflected
echos were received and recorded. The pulses containing different
number of transmitted cycles produced signals with slightly different
peak amplitudes. In order to achieve comparable base-ground SNR level,
white Gaussian noise was added to the received signal resulting in
identical SNR of 41dB. The pulse shape was estimated from the echo
reflected from the first Perspex layer, which was thick enough to
ensure the separation of the first echo from the consecutive reflections.
Later, different levels of SNR were achieved by adding white Gaussian
noise with different standard deviation to the resulting signals. 

\begin{figure*}
\begin{centering}
\includegraphics[scale=0.7]{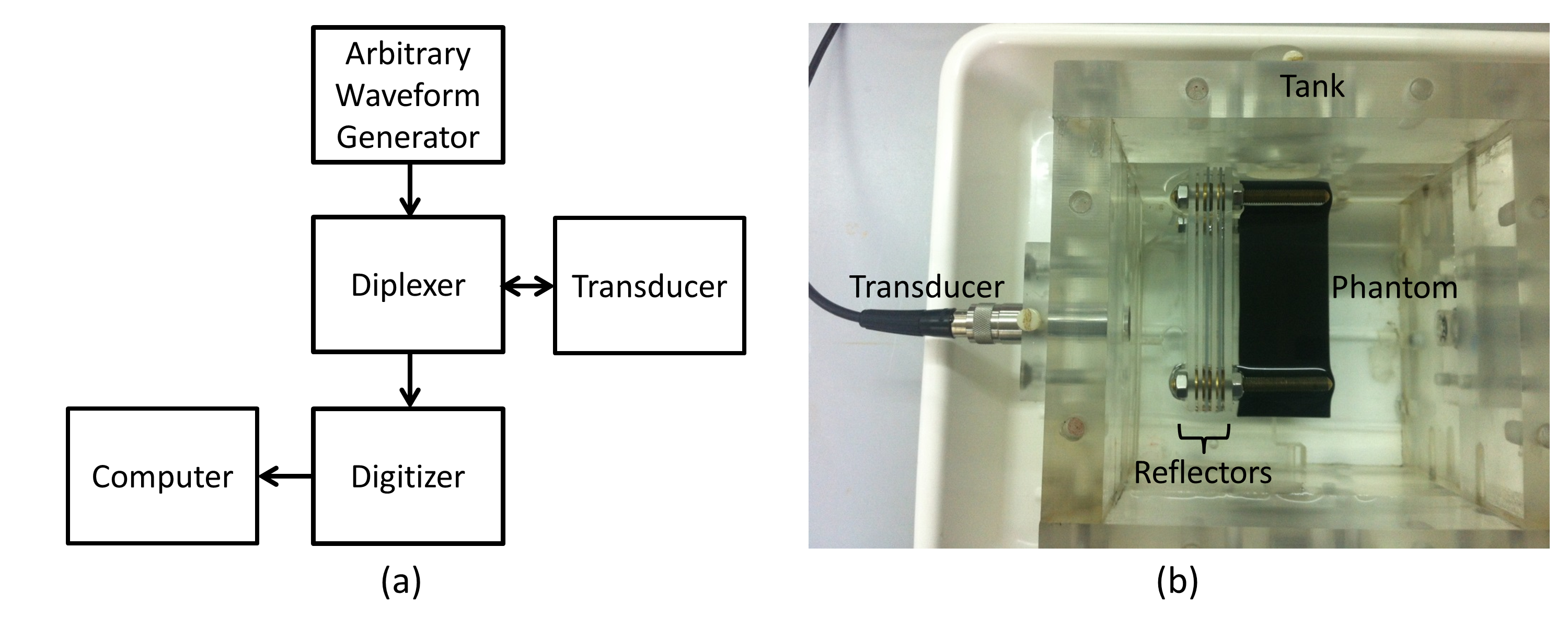}
\par\end{centering}

\caption{\label{fig:Exprimental-set-up:}Experimental set-up. (a) a diagram
of the experimental setup. (b) the mechanical phantom placed in front
of the transducer. }
\end{figure*}

The RF signals are demodulated as described in Section \ref{sub:Numerical-Simulation}.
After the demodulation, the optimization problem (\ref{eq:opt_noisy})
was solved using CVX \cite{cvx}. Complex reflectivity function values
were enabled in order to facilitate the incorporation of phase delays
into the signal model as proposed in \cite{wagner2012compressed}.
The value of the noise level parameter $\delta$ was selected so that
$\left\Vert \eta\ast g\right\Vert _{\ell_{1}}\leq\delta$ in accordance
with equation (\ref{eq:opt_noisy}) and adjusted according to the
SNR level. The reflectivity function estimates were compared against
the true location of the reflectors used as a reference. The absolute
value of the estimated reflectivity function is presented along with
the location of the reflectors. Since both the attenuation and acoustic
reflections gradually reduce the energy of the traveling wave, there
was no attempt to compare the estimated magnitude of the reflectivity
function to that of the phantom.

Two RF-lines received after transmitting a single-cycle pulse and
a four-cycle pulse are presented in Figures \ref{fig:Results-of-phantom}(a)
and \ref{fig:Results-of-phantom}(b), respectively. The estimated
reflectivity function correlates well with the structure of the phantom
and the different interfaces originally masked by the interference
of the pulses are revealed. When wider pulses are transmitted, the
kernel-dependent minimal separation condition demands greater separation
of the strong reflectors. As a result, better localization of the
reflectivity estimates are achieved for shorter pulses, containing
one or two cycles. Since the reflected energy is not compensated after
each layer transition, the amplitude of the estimated reflectivity
function decrease with depth. As the pulse energy decrease with depth,
the local SNR decreases, resulting in reduced localization of deeper
phantom layers.

\begin{figure*}
\centering{}\includegraphics[scale=0.7]{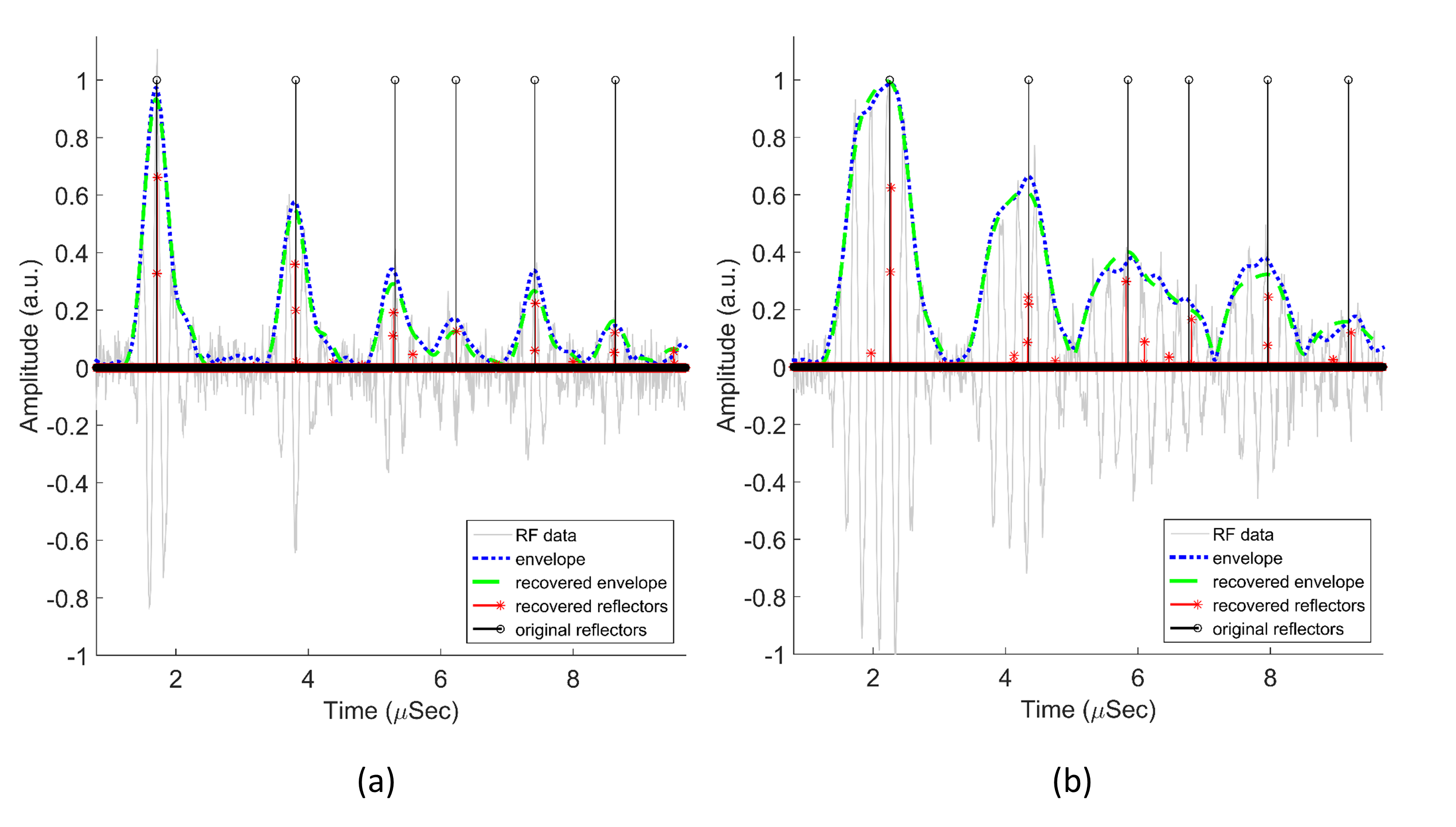}\caption{\label{fig:Results-of-phantom}Representative results of phantom experiments:
reflectivity estimates are presented against true reflectors' location,
for two different pulse lengths at SNR = 23dB. (a) Results from a
single cycle pulse (b) Results from a four cycle pulse. Each subplot
contains the received RF-line, the estimated and the original reflectivity
functions and the original and reconstructed envelopes of the signal. }
\end{figure*}

In order to quantify the performances of the localization algorithm,
the localization error was measured for different levels of SNR and
for pulses containing different number of cycles. 100 experiments
were performed for each SNR level and pulse length combination. A
representative example of an estimation performed with different SNR
levels can be found in Figure \ref{fig:Localization-loss}. As the
SNR decreases the localization error increases. For sufficiently large
SNR, the support of the solution (of the discrete optimization problem)
is tightly clustered around the true support (on the continuum). This
phenomenon is typical for $\ell_{1}$ minimization problems (see the
analysis on compact domains in \cite{duval2013exact,duval2015sparse}).
Additionally, we present a comparison between our convex optimization
approach and two widespread algorithms: MUSIC and orthogonal matching
pursuit (OMP). The convex optimization approach shows competitive
results and outperforms these algorithms for low SNR values. We stress
that we do not aim to make a thorough performance comparison of different
algorithms for the decomposition of stream of pulses. 

It is important to note that the measured SNR reflects the SNR level
for the closest reflector. The signal resulting from reflectors located
further from the transducer has lower SNR. As the noise level increases
the signal reflected from some of the deeper reflector drops below
the noise floor. Our main result in Theorem \ref{th:main} do not
guarantee the detection and localization of the reflectors in these
cases. 

The reduced localization for low SNR levels and long pulses is reflected
by higher standard deviation of the localization error (see Figure
\ref{fig:Localization-loss}(c)). When transmitting short pulses the
localization of the reflectors remains stable even for low SNR levels.
The standard deviation of the localization error for each SNR level
is higher for longer pulses, as predicted. 

\begin{figure*}
\begin{centering}
\includegraphics[scale=0.2]{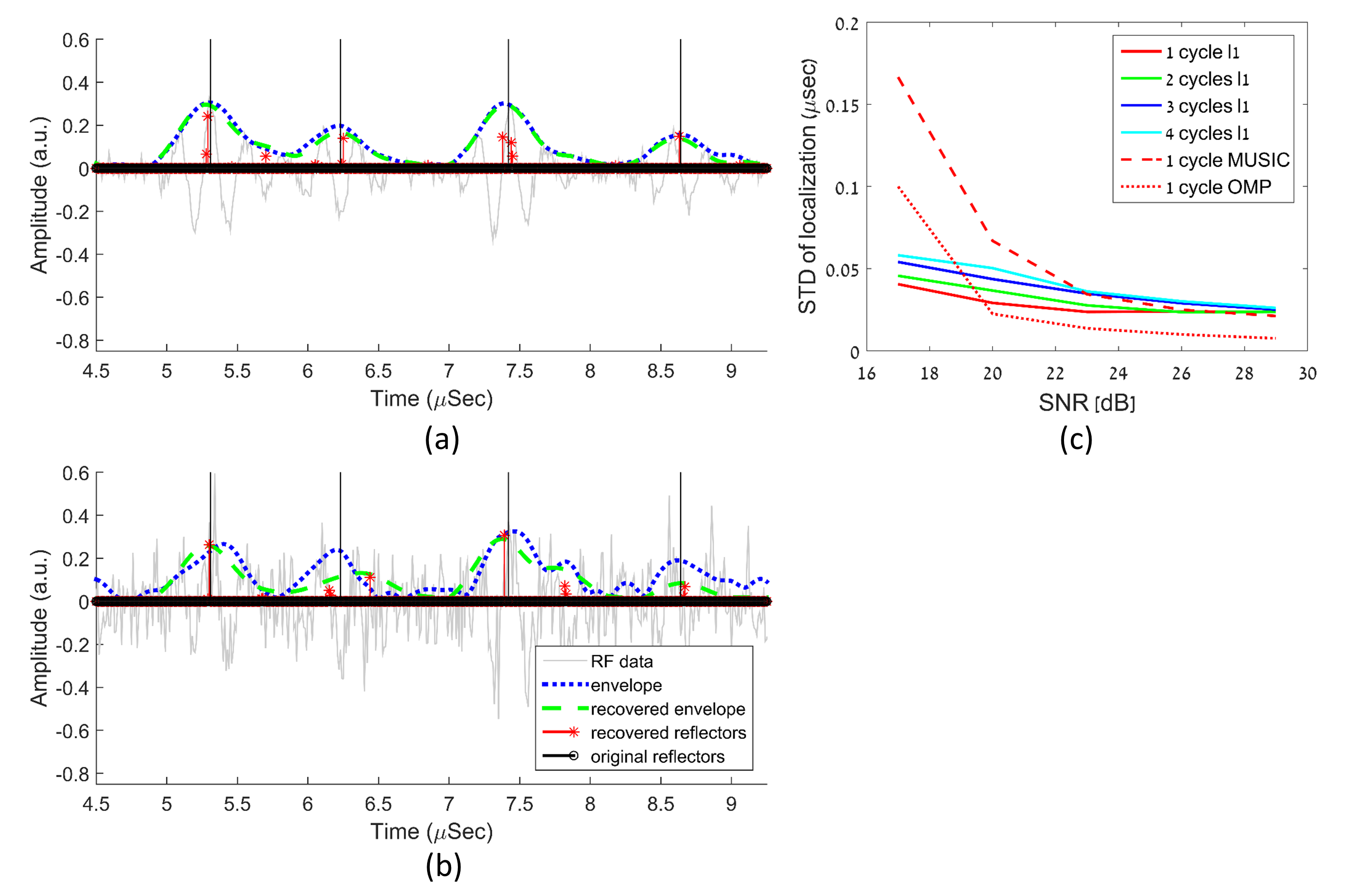}
\par\end{centering}

\caption{\label{fig:Localization-loss}Localization reduction for low SNR levels:
reflectivity estimates of the 3rd - 6th reflectors are presented against
true reflectors location, for single cycle pulse at two different
SNR values. Subplot (a) shows the results for SNR = 29dB and subplot
(b) shows the results for SNR = 17dB. Subplot (c) presents the standard
deviation of the localization error of (\ref{eq:opt_noisy}) for different
pulses as function of the SNR. Additionally, the errors of OMP and
MUSIC algorithms are presented for a single cycle pulse. 100 experiments
were performed for each case.}
\end{figure*}

Following Corollary \ref{cor:artifacts}, the solution $\hat{x}$
can include some false reflectors detections of small amplitude, far
from the real location of the reflectors. In order to quantify the
amplitude of these estimation artifacts as a function of the SNR level,
the mean value of $\left\Vert \hat{x}\right\Vert _{1}$ between the
first and the second reflector was evaluated (see gray region in Figure
\ref{fig:Estimation-artifacts}(a)). The minimal distance from each
reflector used in this analysis was two wavelength. Since the distance
between the first and second reflections in the measured (low-resolution)
signal is reduced for longer pulses, the analysis was performed for
the single cycle measurements. A monotonic reduction in the mean amplitude
of the false detections as a function SNR was measured as presented
in Figure \ref{fig:Estimation-artifacts}(b). This inverse relation
between the SNR and the estimation artifacts corroborates Corollary
\ref{cor:artifacts}.

\begin{figure*}
\begin{centering}
\includegraphics[scale=0.7]{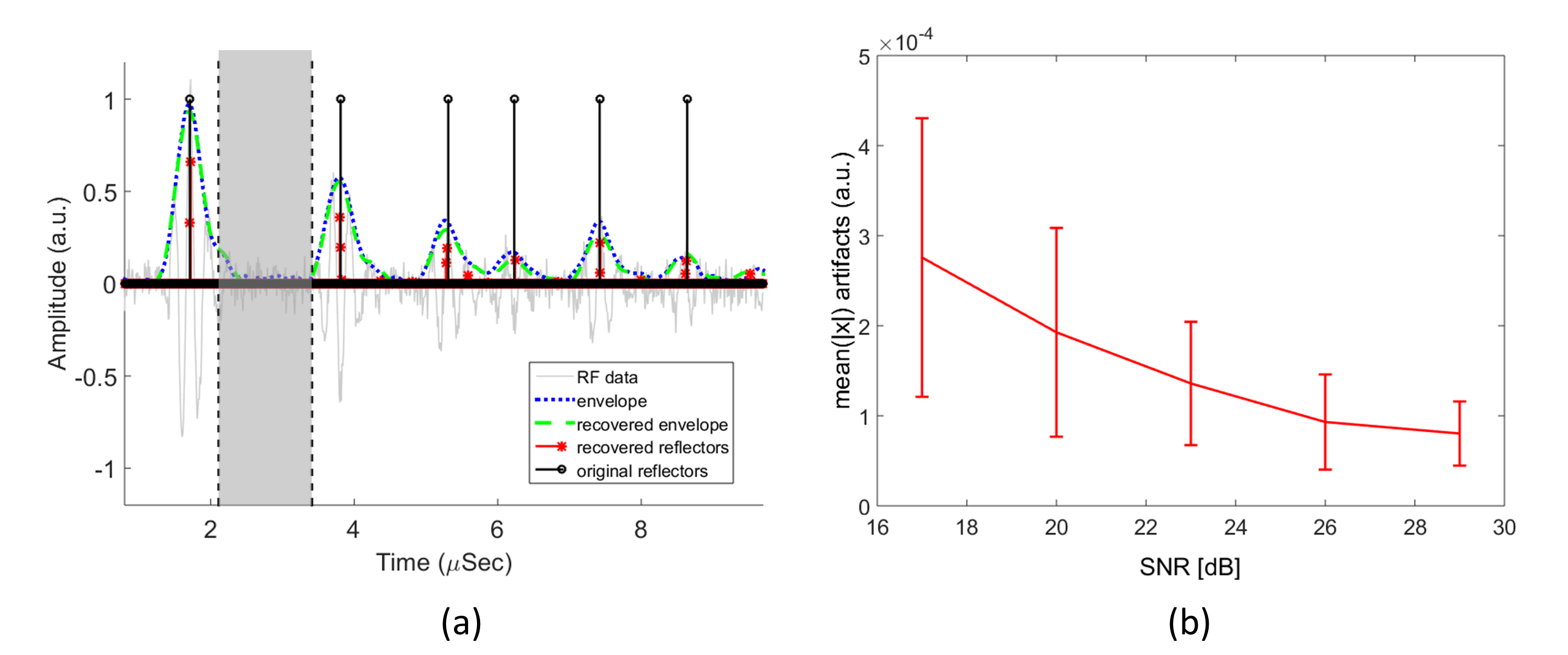}
\par\end{centering}

\caption{\label{fig:Estimation-artifacts}False detections: The mean value
of non-zero reflectivity estimates located between the first and second
reflector but far from the true location of these reflectors (gray
region in subplot (a)) was evaluated in order to quantify the estimation
artifacts level (subplot (b)) as a function of the SNR. }
\end{figure*}

\section{Conclusions \label{sec:Conclusions}}

In this work, we derived theoretical results regarding the localization
properties of convex optimization technique for decomposition of a
stream of pulses. This highlights the appealing properties of the
convex optimization approach for the decomposition of stream of pulses,
as proved and discussed in previous works \cite{bendorySOP,bendoryPositiveSOP}. 

From the theoretical perspective, we demonstrated the effectiveness
of the separation condition. Furthermore, it is clear that without
some separation, no algorithm can decompose a stream of pulses in
the presence of noise. However, it is not clear yet that a separation
condition as presented in Definition \ref{def:separation} is necessary,
and what is the optimal regularity condition for the recovery by tractable
algorithms, such as convex optimization, algebraic methods and iterative
algorithms. 

The experimental results presented in Section \ref{sec:exp} corroborate
our theoretical results. With sufficient separation of scatterers,
the localization degree of strong scatterers was shown to be governed
by the noise: the mean and the standard deviation of the localization
error increases with the noise level. In both numerical simulations
and phantom experiments, for a given SNR level, short pulses produced
better localization compared to wider pulses. In addition, the amplitude
of artifact scatterers located far from the location of the real scatterers
is shown to be small and decrease monotonically with the noise level.

The results presented in this paper have important implications on
studies attempting to perform super-resolution on signals that can
be modeled as streams of pulses. Specifically, the results of ultrasound
deconvolution algorithms based on signal sparsity/compressibility
priors can be analyzed and explained. The reliable localization of
strong reflectors in ultrasound images has important applications
such as the accurate measurement of the intima-media thickness (IMT),
the innermost layers of the of an artery, in order to facilitate early
detection of plaque formation \cite{wendelhag1997new,wikstrand2007methodological}.
Another possible use could be in non-destructive tests. Moreover,
the stream of pulses model is a popular model in compressed sensing
ultrasound \cite{wagner2012compressed,DBLP:journals/corr/ChernyakovaE13}
and compressed sensing radar \cite{Bar-IlanSub-Nyquis}. Therefore,
the presented results have implication on the interpretation of the
representations of the signals in these applications.

\textbf{Acknowledgment} The authors thank Amit Livneh for his help
with the in-vitro ultrasound setup and to the anonymous referees for
their constructive and valuable comments that have significantly improved
this paper. 

\bibliographystyle{plain}
\addcontentsline{toc}{section}{\refname}\bibliography{bib}

\appendix

\section{\label{sec:appA}Proof of Lemma \ref{lemma:2.1}}

Let $u_{m}=sign\left(c_{m}\right)$, where $sign(z):=\frac{z}{\left|z\right|}$
and $q(t)$ the corresponding interpolating function as defined in
Lemma \ref{lemma:q}. Denoting $q[k]=q\left(k/N\right),\thinspace k\in\mathbb{Z}$
we get from Lemma \ref{lemma:q} for all $k\in\mathbb{Z}_{near}$
\[
\left|q\left[k\right]\right|\leq1-\frac{\beta\left(k-k_{m}\right)^{2}}{4g(0)\left(N\sigma\right)^{2}}.
\]
Therefore, we conclude that 
\begin{eqnarray}
 & \sum_{k\in\mathbb{Z}_{near}}q[k]\hat{x}[k]\nonumber \\
= & \sum_{k\in\hat{K}_{near}}q[k]\hat{x}[k]\nonumber \\
\leq & \sum_{\left\{ n:\thinspace\hat{k}_{n}\in\hat{K}_{near}\right\} }\left|\hat{c}_{n}\right|\left|q\left[\hat{k}_{n}\right]\right|\nonumber \\
\leq & \sum_{\left\{ n:\thinspace\hat{k}_{n}\in\hat{K}_{near}\right\} }\left|\hat{c}_{n}\right|\left(1-\frac{\beta d^{2}\left(\hat{k}_{n},K\right)}{4g(0)\left(N\sigma\right)^{2}}\right).\label{eq:low_bound}
\end{eqnarray}
Using the fact that $\left\Vert q\right\Vert _{\infty}:=\max_{k\in\mathbb{Z}}\left|q[k]\right|\leq1$
and by Theorem \ref{th:noisy} we observe that 
\begin{eqnarray}
\left|\sum_{k\in\mathbb{Z}_{near}}q[k]\left(\hat{x}[k]-x[k]\right)\right| & \leq & \sum_{k\in\mathbb{Z}}\left|q[k]\right|\left|\hat{x}[k]-x[k]\right|\nonumber \\
 & \leq & \left\Vert q\right\Vert _{\infty}\left\Vert \hat{x}-x\right\Vert _{1}\label{eq:qh}\\
 & \leq & \frac{16\gamma^{2}}{\beta}\delta.\nonumber 
\end{eqnarray}
Then, combining (\ref{eq:low_bound}) and (\ref{eq:qh}) into

\begin{eqnarray*}
\sum_{k\in\mathbb{Z}_{near}}q[k]\hat{x}[k] & = & \sum_{k\in\mathbb{Z}_{near}}q[k]\left(\hat{x}[k]-x[k]\right)+\sum_{k\in\mathbb{Z}_{near}}q[k]x[k].
\end{eqnarray*}
we get 
\begin{eqnarray*}
\sum_{\left\{ m:\thinspace k_{m}\in K\right\} }\left|c_{m}\right|-\frac{16\gamma^{2}}{\beta}\delta & \leq & \sum_{k\in\mathbb{Z}_{near}}q[k]\left(\hat{x}[k]-x[k]\right)+\sum_{k\in\mathbb{Z}_{near}}q[k]x[k]\\
 & \leq & \sum_{\left\{ n:\thinspace\hat{k}_{n}\in\hat{K}_{near}\right\} }\left|\hat{c}_{n}\right|\left(1-\frac{\beta d^{2}\left(\hat{k}_{n},K\right)}{4g(0)\left(N\sigma\right)^{2}}\right),
\end{eqnarray*}
where we used the fact that $\sum_{k\in\mathbb{Z}_{near}}q[k]x[k]=\sum_{\left\{ m:\thinspace k_{m}\in K\right\} }\left|c_{m}\right|$.
To conclude the proof, we use (\ref{eq:opt_noisy}) to derive 
\[
\left\Vert x\right\Vert _{1}=\sum_{\left\{ m:\thinspace k_{m}\in K\right\} }\left|c_{m}\right|\geq\left\Vert \hat{x}\right\Vert _{1}=\sum_{\left\{ n:\thinspace\hat{k}_{n}\in\hat{K}\right\} }\left|\hat{c}_{n}\right|\geq\sum_{\left\{ n:\thinspace\hat{k}_{n}\in\hat{K}_{near}\right\} }\left|\hat{c}_{n}\right|.
\]
Therefore 
\[
\sum_{\left\{ n:\thinspace\hat{k}_{n}\in\hat{K}_{near}\right\} }\left|c_{n}\right|-\frac{16\gamma^{2}}{\beta}\delta\leq\sum_{\left\{ n:\thinspace\hat{k}_{n}\in\hat{K}_{near}\right\} }\left|\hat{c}_{n}\right|\left(1-\frac{\beta d^{2}\left(\hat{k}_{n},K\right)}{4g(0)\left(N\sigma\right)^{2}}\right)
\]
and 
\begin{eqnarray*}
\sum_{\left\{ n:\thinspace\hat{k}_{n}\in\hat{K}_{near}\right\} }\left|\hat{c}_{n}\right|d^{2}\left(\hat{k}_{n},K\right) & \leq & \frac{64g(0)\gamma^{2}\left(N\sigma\right)^{2}}{\beta^{2}}\delta\\
 & \leq & \frac{64g(0)\gamma^{4}}{\beta^{2}}\delta.
\end{eqnarray*}
This completes the proof.
\end{document}